\documentclass[leqno]{amsart}

\usepackage[numbers]{tsnatbib}
\usepackage{mathabx}
\usepackage{amsmath, amsthm}
\usepackage{amsfonts,amssymb,dsfont}
\usepackage{nicefrac,booktabs,mathrsfs}
\usepackage{bm}                                 
\usepackage{stmaryrd}    

\newcommand{\cA}{{\mathcal A}}  

\newcommand{\cB}{{\mathcal B}}

\newcommand{\ccF}{{\mathscr F}}\newcommand{\cF}{{\mathcal F}}
\newcommand{\ccG}{{\mathscr G}}\newcommand{\cG}{{\mathcal G}}
\newcommand{\ccH}{{\mathscr H}}\newcommand{\cH}{{\mathcal H}}

\newcommand{\cU}{{\mathcal U}}

\newcommand{\cX}{{\mathcal X}}





%

%

\newenvironment{enumeratei}
  {\begin{enumerate} }
  {\end{enumerate}}

\newcommand{\half}{\frac{1}{2}}

\newcommand{\Ind}{{\mathds 1}}
\newcommand{\ind}[1]{\Ind_{\{#1\}}}

\newcommand{\R}{\mathbb{R}}

\newcommand{\bbF}{\mathbb{F}}
\newcommand{\bbH}{\mathbb{H}}
\newcommand{\bbG}{\mathbb{G}}





\newtheorem{theorem}{Theorem}[section]
\newtheorem{lemma}[theorem]{Lemma}              
\newtheorem{proposition}[theorem]{Proposition}  

\theoremstyle{definition}
\newtheorem{example}{Example}[section]

\newtheorem{remark}{Remark}[section]

\newtheorem{assumption}{Assumption}[section]

\renewcommand{\cF}{\ccF}
\renewcommand{\cG}{\ccG}
\renewcommand{\cH}{\ccH} 

\usepackage[pdftex,colorlinks,urlcolor=red,citecolor=blue,linkcolor=red]{hyperref}

\renewcommand{\cF}{\ccF}
\renewcommand{\cG}{\ccG}
\renewcommand{\cH}{\ccH}

\textwidth 14cm

\begin{document}

\title[Generalized intensity]{A generalized intensity based framework for single-name credit risk
}
\author{Frank Gehmlich and Thorsten Schmidt}
\address{Frank Gehmlich email: gehmlichfrank@gmail.com and 
Thorsten Schmidt, University of Freiburg, Dep. of Mathematics, Eckerstr 1, 79106 Freiburg, Germany email:
thorsten.schmidt@stochastik.uni-freiburg.de}
%
%
 \date{November 19, 2015}

\maketitle

\begin{abstract}
The intensity of a default time is obtained by assuming that the default indicator process has an absolutely continuous compensator. Here we drop the assumption of absolute continuity with respect to the Lebesgue measure and only assume that the compensator is absolutely continuous with respect to a general $\sigma$-finite measure. This allows for example to incorporate the Merton-model in the generalized intensity based framework. An extension of the Black-Cox model is also considered.
We propose a class of generalized Merton models and study absence of arbitrage by a suitable modification of the forward rate approach of Heath-Jarrow-Morton (1992). Finally, we study affine term structure models which fit in this class. They exhibit stochastic discontinuities in contrast to the affine models previously studied in the literature.
\end{abstract}

\section{Introduction}


The two most common approaches to credit risk modelling are the \emph{structural} approach, pioneered in the seminal work of Merton \cite{Merton1974}, and the \emph{reduced-form} approach which can be traced back to early works of Jarrow, Lando, and Turnbull \cite{JarrowTurnbull1995,Lando94} and to \cite{ArtznerDelbaen95}.

The default of a company happens when the company is not able to meet its obligations. In many cases the debt structure of a company is known to the public, such that default happens with positive probability at times which are known a priori. This, however, is excluded in the intensity-based framework and it is the purpose of this article to put forward a generalisation which allows to incorporate such effects. Examples in the literature are, e.g., structural models like  \cite{Merton1974} and \cite{Geske1977,GeskeJohnson84}. The recently missed coupon payment by Argentina is an example for such a credit event as well as the default of Greece on the 1st of July\footnote{Argentina's missed coupon payment on \$29 billion debt was voted a credit event by the International Swaps and Derivatives Association, see the announcements in \cite{ISDAArgentina2014} and \cite{ReutersArgentina2014}. Regarding the failure of 1.5 Billon EUR of Greece on a scheduled debt repayment to the International Monetary fund, see e.g. \cite{NYTimes2015}.}.

It is a remarkable observation of \cite{BelangerShreveWong2004} that it is possible to extend the reduced-form approach beyond the class of intensity-based models. The authors study a class of first-passage time models under a filtration generated by a Brownian motion and show its use for pricing and modelling credit risky bonds. Our goal is to start with even weaker assumptions on the default time and to allow for jumps in the compensator of the default time at deterministic times. From this general viewpoint it turns out, surprisingly, that previously used HJM approaches lead to arbitrage: the whole term structure is absolutely continuous and can not compensate for points in time bearing a positive default probability. We propose a suitable extension with an additional  term allowing for discontinuities in the term structure at  certain random times and derive precise drift conditions for an appropriate  no-arbitrage condition. The related article \cite{GehmlichSchmidt2015} only allows for the special case of finitely many risky times, an assumption which is dropped in this article.

The structure of the article is as follows: in Section \ref{sec:tau}, we introduce the general setting and study drift conditions in an extended HJM-framework which guarantee absence of arbitrage in the bond market. In  Section \ref{secAffine} we study a class of affine models which are stochastically discontinuous. Section \ref{sec:conclusion} concludes.

\section{A general account on credit risky bond markets}\label{sec:tau}
 
Consider a filtered probability space $(\Omega, \cA, \bbG, P)$ with a filtration $\bbG=(\cG_t)_{t \ge 0}$ (the \emph{general} filtration) satisfying the usual conditions, i.e.\ it is right-continuous and  $\cG_0$ contains the $P$-nullsets $N_0$ of $\cA$. Throughout, the probability measure $P$ denotes the objective measure. As we use tools from stochastic analysis, all appearing filtrations shall satisfy the usual conditions. We follow the notation from \cite{JacodShiryaev} and refer to this work for details on stochastic processes which are not laid out here.

The filtration $\bbG$ contains all available information in the market. The default of a company is public information and we therefore assume that the default time $\tau$ is a $\bbG$-stopping time.
We denote the \emph{default indicator process} $H$ by  
  $$ H_t = \ind{t \ge \tau}, \qquad t \ge 0, $$
such that $H_t=\Ind_{\llbracket\tau,\infty\llbracket}(t)$ is a right-continuous, increasing process. We will also make use of the \emph{survival process} $1-H=\Ind_{\llbracket 0, \tau\llbracket}$. The following remark recalls the essentials of the well-known intensity based approach.

\begin{remark}[The intensity-based approach]
 The intensity-based approach consists in two steps: first, denote by $\bbH=(\cH_t)_{t \ge 0}$ the filtration generated by the default indicator, $\cH_t=\sigma(H_s:0 \le s \le t) \vee N_0$, and assume that there exists a sub-filtration $\bbF$ of $\bbG$, i.e.~$\cF_t \subset \cG_t$ holds for all $t \ge 0$ such that 
 \begin{align} \label{GFH} 
 \cG_t = \cF_t \vee \cH_t, \qquad t \ge 0. 
 \end{align} 
 Viewed from this perspective, $\bbG$ is obtained from the default information $\bbH$ by a \emph{progressive enlargement}\footnote{Note that here $\bbG$ is right-continuous and $P$-complete by assumption which is a priori not guaranteed by \eqref{GFH}. One can, however, use the right-continuous extension and we refer to \cite{GuoZeng2008} for a precise treatment and for a guide to the related literature.} with the filtration $\bbF$. This assumption opens the area for the largely developed field of enlargements of filtration with a lot of powerful and quite general results. 

 Second, the following key assumption specifies the default intensity: assume that there is an $\bbF$-progressive process $\lambda$, such that 
 \begin{align}
  P(\tau > t |\cF_t) = \exp\Big(-\int_0^t \lambda_s ds\Big), \quad t \ge 0.
 \end{align}
 It is immediate that the inclusion $\cF_t \subset \cG_t$ is strict under existence of an intensity, i.e.~$\tau$ is not a $\bbF$-stopping time. Arbitrage-free pricing can be achieved via the following result: Let $Y$ be a non-negative random variable. Then, for all $t \ge 0$,
 $$ E[\ind{\tau >t} Y |\cG_t] = \ind{\tau>t}e^{\int_0^t \lambda_s ds}E[\ind{\tau>t}Y|\cF_t]. $$
 Of course, this results hold also when a pricing measure $Q$ is used instead of $P$.
 For further literature and details we refer for example to \cite{Filipovic2009}, Chapter 12, and to \cite{BieleckiRutkowski2002}.
\end{remark}

\subsection{The generalized intensity-based framework}
The default indicator process $H$ is a bounded, c\'adl\'ag and increasing process, hence a submartingale of class (D), that is the family $(X_T)$ over all stopping times $T$ is uniformly integrable. By the Doob-Meyer decomposition\footnote{See \cite{KaratzasShreve1988}, Theorem 1.4.10.}, the process 
\begin{align}
\label{eq:M} M_t = H_t - \Lambda_t, \quad t \ge 0 
\end{align}
is a true martingale where $\Lambda$ denotes the dual $\bbF$-predictable projection, also called  compensator, of $H$. As $1$ is an absorbing state, $\Lambda_t=\Lambda_{t \wedge \tau}$. 
To keep the arising technical difficulties at a minimum, we assume that there is an increasing process $A$ such that 
\begin{align}\label{Hp}
\Lambda_t = \int_0^{t \wedge \tau} \lambda_s dA(s), \quad t \ge 0, 
\end{align}
with a non-negative and predictable process $\lambda$. The process $\lambda$ is called \emph{generalized intensity} and we refer to Chapter VIII.4 of \cite{Bremaud1981} for a more detailed treatment of generalized intensities (or, equivalently, dual predictable projections) in the context of point processes. 

Note that with $\Delta M \le 1$ we have that $\Delta \Lambda=\lambda_s\Delta A(s) \le 1$. 
Whenever $\lambda_s\Delta A(s) > 0$, there is a positive probability that the company defaults at time $s$. We call such times \emph{risky times}, i.e.\ predictable times having a positive probability of a default occurring right at that time. Note that under our assumption \eqref{Hp}, all risky times are deterministic. The relationship between $\Delta \Lambda(s)$ and the default probability at time $s$ will be clarified in Example \ref{ex:doublystochastic}.


\subsection{An extension of the HJM-approach}\label{sec:HJM}
A credit risky bond with maturity $T$ is a contingent claim promising to pay one unit of currency at  $T$. The price of the bond with maturity $T$ at time $t \le T$ is denoted by $P(t,T)$. If no default occurred prior to or at $T$ we have that $P(T,T)=1$. 
We will consider zero recovery, i.e.\ the bond loses its total value at default, such that $P(t,T)=0$ on $\{t \ge \tau\}$. The family of stochastic processes $\{(P(t,T)_{0 \le t \le T})$, $T\ge 0\}$ describes  the  evolution of the \emph{term structure} $T \mapsto P(.,T)$ over time.

Besides the bonds there is a \emph{num\'eraire} $X^0$, which is a strictly positive, adapted process. We make the  weak assumption that $\log X^0$ is absolutely continuous, i.e.~$X^0_t=\exp(\int_0^t r_s ds)$ with a progressively measurable process $r$, called the short-rate. For practical applications one would use the overnight index swap (OIS) rate for constructing such a num\'eraire.

The aim of the following is to extend the  HJM approach in an appropriate way to the generalized intensity-based framework in order to obtain arbitrage-free bond prices. First approaches in this direction were \cite{Schoenbucher:CRDerivatives,DuffieSingleton99} and a rich source of literature is again \cite{BieleckiRutkowski2002}.
Absence of arbitrage in such an infinite dimensional market can be described in terms of no asymptotic free lunch (NAFL) or the more economically meaningful no asymptotic free lunch with vanishing risk, see \cite{KleinSchmidtTeichmann2015} and \cite{CuchieroKleinTeichmann}.

 Consider a pricing measure $Q^*\sim P$. Our intention is to find conditions which render $Q^*$ an equivalent local martingale measure. 
In the following, only occasionally the measure $P$ will be used, such that from now on, all appearing terms (like martingales, almost sure properties, etc.) are to be considered with respect to $Q^*$.


To ensure that the subsequent analysis is meaningful, we make the  following technical assumption.
\begin{assumption}\label{ass1}
The generalized default intensity $\lambda$ is non-negative, predictable and $A$-integrable on $[0,T^*]$:
  $$ \int_0^{T^*} \lambda_s dA(s) < \infty, \quad Q^*\text{-a.s.} $$
Moreover, $A$ has vanishing singular part, i.e.
\begin{align}\label{ass:A} 
A(t) = t + \sum_{0<s\le t}\Delta A(s). 
\end{align}
\end{assumption}
The representation \eqref{ass:A} of $A$ is without loss of generality: indeed, if the continuous part $A^c$ is absolutely continuous, i.e. $A^c(t)=\int_0^t a(s) ds$, replacing $\lambda_s$ by $\lambda_sa(s)$ gives the compensator of $H$ with respect to $\tilde A$ whose continuous part is $t$.

Next, we aim at building an arbitrage-free framework for bond prices. In the generalized intensity-based framework, the (HJM) approach does allow for arbitrage opportunities at risky times. We therefore consider the following generalization: consider a $\sigma$-finite (deterministic) measure $\nu$. We could be  general on $\nu$,  allowing for an absolutely continuous, a singular continuous and a pure-jump part. However, for simplicity, we leave the singular continuous part aside and assume that
$$ \nu = \nu^{ac} + \nu^d $$
where $\nu^{ac}(ds)=ds$ and $\nu^d$ distributes mass only to points, i.e. $\nu^d(A)=\sum_{i \ge 1}w_i \delta_{u_i}(A)$, for $0<u_1<u_2<\dots$ and positive weights $w_i>0$, $i\ge 1$; here $\delta_u$ denotes the Dirac measure at $u$.
Moreover, we assume that defaultable bond prices are given by
  \begin{align}\label{PtT}
  P(t,T) &= \ind{\tau>t} \exp\bigg( -\int_t^T f(t,u) \nu(du)\bigg) \nonumber\\
  &= \ind{\tau>t} \exp\bigg( -\int_t^T f(t,u) du - \sum_{i \ge 1} \ind{u_i \in (t,T]} w_i f(t,u_i)\bigg) , \quad 0 \le t \le T \le T^*.
  \end{align}
The sum in the last line gives the extension over the (HJM) approach which allows us to deal with risky times in an arbitrage-free way.

The family of processes $(f(t,T))_{ 0 \le t \le T}$ for $T \in [0,T^*]$  are assumed to be It\^o processes satisfying
\begin{align}\label{f1}
f(t,T) &= f(0,T) + \int_0^t a(s,T)ds + \int_0^t b(s,T) \cdot dW_s
\end{align}
with an $n$-dimensional $Q^*$-Brownian motion $W$. 

Denote by $\cB$ the Borel $\sigma$-field over $\R$. 
\begin{assumption}\label{ass2}We require the following technical assumptions:
\begin{enumeratei}
\item  the initial forward curve is measurable, and integrable on $[0,T^*]$:
    $$\int_0^{T^*}|f(0,u)|<\infty, \qquad Q^*\text{-a.s.},$$
\item  the \emph{drift parameter} $a(\omega,s,t)$ is $\mathbb{R}$-valued $\mathcal{O}\otimes\mathcal{B}$-measurable and  integrable on $[0,T^*]$:
    $$\int_0^{T^*}\int_0^{T^*} |a(s,u)| ds\,\nu(du)<\infty, \quad Q^*\text{-a.s.},$$
\item  the \emph{volatility parameter} $b(\omega,s,t)$ is $\mathbb{R}^n$-valued, $\mathcal{O}\otimes\mathcal{B}$-measurable, and
    $$ \sup_{s,t\leq T^*} \parallel b(s,t) \parallel <\infty, \quad Q^*\text{-a.s.}$$
\item  it holds that
$$ 0 \le \lambda(u_i)\Delta A(u_i) < w_i, \quad i \ge 1. $$
\end{enumeratei}\end{assumption}

Set
\begin{align}\label{abisbetabar}
\begin{aligned}
\bar{a}(t,T) &= \int_t^T a(t,u) \nu(du), \\ 
\bar{b}(t,T) &= \int_t^T b(t,u) \nu(du), \\
H'(t) &= \int_0^t \lambda_s ds - \sum_{u_i \le t} \log\Big( \frac{w_i-\lambda_{u_i}\Delta A(u_i)}{w_i}\Big).
\end{aligned}
\end{align}
The following proposition gives the desired drift condition in the generalized Merton models. 

\begin{theorem}\label{prop:dcm}
Assume  that Assumptions \ref{ass1} and \ref{ass2} hold. Then $Q^*$ is an ELMM if and only if the following conditions hold:   $\{s:\Delta A(s) \neq 0\}\subset \{u_1,u_2,\dots\}$,  and
\begin{align}
\int_0^t f(s,s) \nu(ds) &= \int_0^t r_s ds + H'(t),  \label{dcm1}\\
\bar a(t,T)  &=  \half \parallel \bar b(t,T) \parallel^2 , \label{dcm2}
\end{align}
for $0 \le t \le T \le T^*$  $dQ^* \otimes dt$-almost surely on $\{t<\tau\}$.
\end{theorem}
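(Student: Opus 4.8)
The plan is to reduce the ELMM property to the statement that, for every fixed maturity $T\le T^*$, the discounted bond price $\tilde P(t,T):=e^{-\int_0^t r_s\,ds}P(t,T)$ is a local $Q^*$-martingale, and then to read the two drift conditions off its canonical decomposition. Writing $Y(t,T)=-\int_t^T f(t,u)\,\nu(du)$ so that $P(t,T)=\ind{\tau>t}e^{Y(t,T)}$, I first compute the dynamics of $Y(\cdot,T)$ by a stochastic Fubini argument applied to \eqref{f1}: the moving integrand contributes $-\bar a(t,T)\,dt-\bar b(t,T)\cdot dW_t$, the moving lower endpoint against the absolutely continuous part of $\nu$ contributes $+f(t,t)\,dt$, and each time $t$ crosses an atom $u_i$ of $\nu^d$ the term $w_i f(t,u_i)$ leaves the exponent, producing a predictable jump $\Delta Y(u_i,T)=w_i f(u_i,u_i)$. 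Assumption \ref{ass2}(i)--(iii) is precisely what legitimizes the interchange of integration and guarantees that $\bar a$ and $\bar b$ are finite.

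I would then apply It\^o's formula to the pre-default discounted bond $e^{-\int_0^t r_s\,ds}e^{Y(t,T)}$, obtaining a continuous relative drift $f(t,t)-\bar a(t,T)-r_t+\half\|\bar b(t,T)\|^2$, relative volatility $-\bar b(t,T)$, and multiplicative jumps $e^{w_i f(u_i,u_i)}$ at the $u_i$. Passing to $\tilde P(t,T)=(1-H_t)\,e^{-\int_0^t r_s\,ds}e^{Y(t,T)}$ by integration by parts, using the martingale $M=H-\Lambda$ of \eqref{eq:M} and $d\Lambda_t=\lambda_t\,dA(t)$ on $\{t<\tau\}$, whose continuous part is $\lambda_t\,dt$ by \eqref{ass:A}, the absolutely continuous drift (where $1-H_{t-}=1$) becomes $f(t,t)-\bar a(t,T)-r_t+\half\|\bar b(t,T)\|^2-\lambda_t$. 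Requiring it to vanish for all $T$: evaluating at $T=t$, where $\bar a(t,t)=0$ and $\bar b(t,t)=0$, isolates the diagonal relation $f(t,t)=r_t+\lambda_t$, and the remaining $T$-dependence is exactly the HJM drift condition \eqref{dcm2}.

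The core of the argument is the predictable jump times. Since a local martingale has vanishing $\cG_{S-}$-conditional expected jump at each predictable time $S$, I impose $\E[\Delta\tilde P(u_i,T)\mid\cG_{u_i-}]=0$ at every atom. On survival, $\tilde P$ jumps to $0$ if default occurs at $u_i$, an event whose $\cG_{u_i-}$-conditional probability is $\Delta\Lambda_{u_i}=\lambda_{u_i}\Delta A(u_i)$ by the Doob--Meyer decomposition, and otherwise jumps by the factor $e^{w_i f(u_i,u_i)}$ inherited from $e^{Y}$. Setting the conditional expectation to zero balances this upward jump against the survival factor and yields the logarithmic diagonal-jump condition; combined with $f(t,t)=r_t+\lambda_t$ this is precisely \eqref{dcm1} integrated against $\nu$. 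The same computation forces $\{s:\Delta A(s)\neq0\}\subset\{u_1,u_2,\dots\}$: at a jump of $A$ that is not an atom of $\nu$ the term structure is continuous, so the positive default probability is uncompensated and the conditional expected jump is strictly negative, destroying the local martingale property---this is the arbitrage at risky times flagged in the introduction.

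For the converse I would check that under $\{s:\Delta A(s)\neq0\}\subset\{u_i\}$, \eqref{dcm1} and \eqref{dcm2} every continuous drift and every predictable jump expectation vanishes, so each $\tilde P(\cdot,T)$ is a local martingale and $Q^*$ is an ELMM. The step I expect to be hardest is the bookkeeping at the predictable times $u_i$: cleanly separating the continuous and purely discontinuous parts of $Y(\cdot,T)$, controlling the covariation of $1-H$ with $e^{Y}$ when $\tau$ coincides with some $u_i$, and applying the predictable-projection characterization of local martingales so that the deterministic jumps are offset exactly by the conditional default probabilities $\lambda_{u_i}\Delta A(u_i)$. Assumption \ref{ass2}(iv), $0\le\lambda_{u_i}\Delta A(u_i)<w_i$, is what keeps these jump factors positive and finite and makes the logarithms in $H'$ well defined.
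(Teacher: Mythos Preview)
Your proposal is correct and follows essentially the same route as the paper: both arguments decompose $\int_t^T f(t,u)\,\nu(du)$ via a stochastic Fubini computation (the paper isolates this as Lemma~\ref{lem32}), apply It\^o to the exponential, use integration by parts against the survival indicator $1-H$ with its compensator $\Lambda$, and then split the resulting drift into its absolutely continuous and purely discontinuous parts to read off \eqref{dcm1}--\eqref{dcm2} and the inclusion $\{s:\Delta A(s)\neq0\}\subset\{u_i\}$. The only cosmetic difference is that at the risky times $u_i$ you phrase the vanishing of the predictable drift as $E^*[\Delta\tilde P(u_i,T)\mid\cG_{u_i-}]=0$ using the conditional default probability, whereas the paper arrives at the same balance equation by explicitly expanding the covariation $[E,F(\cdot,T)]$ and compensating via $M^1$; these are equivalent bookkeeping choices.
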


The first condition, \eqref{dcm1}, can be split in the continuous and pure-jump part, such that \eqref{dcm1} is equivalent to
\begin{align*}
 f(t,t) &= r_s + \lambda_s \\
 f(t,u_i) &= \log \frac{w_i}{w_i-\lambda(u_i)\Delta A(u_i) }\ge 0.
\end{align*}
The second relation states explicitly the connection of the forward rate at a risky time $u_i$ to the probability $Q^*(\tau=u_i|\cF_{u_i-})$, given that $\tau\ge u_i$, of course. It simplifies moreover, if $\Delta A(u_i)=w_i$ to 
\begin{align}\label{temp244} 
f(t,u_i) &= -\log (1-\lambda(u_i)).
\end{align}

For the proof we first provide the canonical decomposition of 
$$ J(t,T):= \int_t^T f(t,u)\nu(du), \qquad 0 \le t \le T. $$
\begin{lemma}\label{lem32}
Assume that Assumption \ref{ass2} holds. Then, for each $T \in [0,T^*]$ the process
$(J(t,T))_{0 \le t \le T}$ is a special semimartingale and
\begin{align*}
J(t,T) &= \int_0^T f(0,u)\nu(du) + \int_0^t\bar a(u,T)du +\int_0^t\bar b(u,T)dW_u  -\int_0^t  f(u,u) \nu(du).
\end{align*} 
\end{lemma}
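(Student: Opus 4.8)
The plan is to substitute the \Ito representation \eqref{f1} of $f(t,u)$ into the definition of $J(t,T)$ and then to reduce the resulting expression to the claimed canonical decomposition by repeated use of the classical and the stochastic Fubini theorems. Writing out the three contributions,
\[ J(t,T) = \int_t^T f(0,u)\,\nu(du) + \int_t^T\!\bigg(\int_0^t a(s,u)\,ds\bigg)\nu(du) + \int_t^T\!\bigg(\int_0^t b(s,u)\cdot dW_s\bigg)\nu(du), \]
one sees that the $t$-dependence enters through two channels: the moving lower endpoint of the outer $\nu$-integral, and the upper endpoints $t$ of the inner time- and stochastic integrals. The second channel will reproduce the coefficients $\bar a(\cdot,T)$ and $\bar b(\cdot,T)$, whereas the first will generate the boundary term $-\int_0^t f(u,u)\,\nu(du)$.

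The key manipulation is to interchange the order of integration in the last two terms. For the drift term, Fubini over the product region $[0,t]\times[t,T]$ gives $\int_0^t(\int_t^T a(s,u)\,\nu(du))\,ds$, and for $0\le s\le t\le T$ I split $\int_t^T(\cdot)\,\nu(du)=\int_s^T(\cdot)\,\nu(du)-\int_s^t(\cdot)\,\nu(du)$, so that after a second interchange over the triangle $\{0\le s\le u\le t\}$ the drift term becomes $\int_0^t\bar a(s,T)\,ds-\int_0^t(\int_0^u a(s,u)\,ds)\,\nu(du)$. Treating the stochastic term in exactly the same way, via the stochastic Fubini theorem, yields $\int_0^t\bar b(s,T)\,dW_s-\int_0^t(\int_0^u b(s,u)\cdot dW_s)\,\nu(du)$. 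Collecting the three integrals carrying a factor $\nu(du)$ over $[0,t]$ and invoking \eqref{f1} once more,
\[ -\int_0^t\!\bigg(f(0,u) + \int_0^u a(s,u)\,ds + \int_0^u b(s,u)\cdot dW_s\bigg)\nu(du) = -\int_0^t f(u,u)\,\nu(du), \]
which, together with the surviving initial term $\int_0^T f(0,u)\,\nu(du)$, is precisely the asserted formula.

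The main obstacle is the rigorous justification of the (stochastic) Fubini theorem, and this is exactly where Assumption \ref{ass2} enters. Since $\nu$ restricted to $[0,T^*]$ is finite — equivalently $\sum_{u_i\le T^*}w_i<\infty$, which is implicit in the well-posedness of the bond prices \eqref{PtT} — the $\mathcal{O}\otimes\mathcal{B}$-measurability of $a$ and $b$ together with $\int_0^{T^*}\!\int_0^{T^*}|a(s,u)|\,ds\,\nu(du)<\infty$ and $\sup_{s,t\le T^*}\|b(s,t)\|<\infty$ supply precisely the joint measurability and integrability hypotheses needed for classical Fubini on the drift part and for the stochastic Fubini theorem applied to integration against $W$. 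The same bounds guarantee that the boundary term is well defined, i.e.\ that $u\mapsto f(u,u)$ is $\nu$-integrable on $[0,t]$.

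Finally, to conclude that $J(\cdot,T)$ is a \emph{special} semimartingale it suffices to read off the decomposition: $\int_0^t\bar b(u,T)\,dW_u$ is a continuous local martingale, while $\int_0^t\bar a(u,T)\,du-\int_0^t f(u,u)\,\nu(du)$ has finite variation. Predictability of this finite-variation part is the only delicate point: it jumps solely at the deterministic times $u_i$, with jump size $w_i f(u_i,u_i)$, and since $t\mapsto f(t,u_i)$ is continuous by \eqref{f1} we have $f(u_i,u_i)=f(u_i-,u_i)$, so the jump sizes are $\cF_{u_i-}$-measurable. Hence the finite-variation component is predictable and the displayed decomposition is the canonical one.
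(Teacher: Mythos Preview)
Your proof is correct and follows essentially the same route as the paper: substitute \eqref{f1} into $J(t,T)$, apply (stochastic) Fubini to swap the $\nu(du)$ and $ds$/$dW_s$ integrals, split $\int_t^T=\int_s^T-\int_s^t$, and recombine the $[0,t]$ pieces into $-\int_0^t f(u,u)\,\nu(du)$. Your additional paragraph justifying predictability of the finite-variation part (via continuity of $t\mapsto f(t,u_i)$ and the fact that the $u_i$ are deterministic) makes explicit the ``special'' claim that the paper leaves implicit.
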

\begin{proof}
Using the stochastic Fubini Theorem (as in \cite{Veraar12}), we obtain
\begin{align*}
J(t,T)&= \int_t^T \bigg(f(0,u)+\int_0^t  a(s,u)ds + \int_0^t b(s,u)dW_s\bigg)\nu(du)\\
&= \int_0^Tf(0,u)\nu(du) +\int_0^t\int_s^T a(s,u)\nu(du)ds + \int_0^t\int_s^T b(s,u)\nu(du)dW_s\\
&- \int_0^tf(0,u)\nu(du) -\int_0^t\int_s^t a(s,u)\nu(du)ds - \int_0^t\int_s^t b(s,u)\nu(du)dW_s\\
&= \int_0^Tf(0,u)\nu(du) +\int_0^t\bar a(s,T)ds + \int_0^t\bar b(s,T)dW_s\\
&- \int_0^t\bigg(f(0,u) -\int_0^u a(s,u)ds - \int_0^u b(s,u)dW_s\bigg)\nu(du),
\end{align*}
and the claim follows.
\end{proof}
\begin{proof}[Proof of Theorem \ref{prop:dcm}]
Set, $E(t) = \ind{\tau>t}$, and  $F(t,T) = \exp\Big(-\int_t^T f(t,u) \nu(du) \Big)$, such that
$ P(t,T)=E(t)F(t,T)$. Integration by parts yields that
\begin{align}\label{PMtTrepr}
  dP(t,T) &=  F(t-,T) d E(t) + E(t-) dF(t,T) + d [E, F(.,T) ]_t  =: (1')+(2')+(3'). 
\end{align}
In view of (1$'$), we obtain from  \eqref{Hp},   that
\begin{align}\label{defM2}
E(t) + \int_0^{t \wedge \tau } \lambda_s dA(s)=: M^1_t  
\end{align}
is a martingale. 
Regarding (2$'$), note that from Lemma \ref{lem32} we obtain by It\^o's formula that
\begin{align}\label{temp563}
  \begin{aligned}
  \frac{d F(t,T)}{F(t-,T)} &= \Big( f(t,t) - \bar a(t,T) + \half \parallel \bar b(t,T) \parallel^2 \Big) dt \\
  &+ \sum_{i \ge 0}  \left(e^{f(t,t)}-1\right) w_i\delta_{u_i}(dt) + dM^2_t,
  \end{aligned}
\end{align}
with a local martingale $M^2$. For the remaining term (3$'$), note that
\begin{align}\label{temp571}
  \sum_{0 < s \le t } \Delta E(s) \Delta F(s,T) &=  \int_0^t F(s-,T) (e^{f(s,s)}-1) \nu(\{s\}) dE(s) \\
  &= \int_0^t F(s-,T) (e^{f(s,s)}-1) \nu(\{s\}) d M^1_s \nonumber \\
  &- \int_0^{t \wedge \tau}    F(s-,T) (e^{f(s,s)}-1) \nu(\{s\}) \lambda_s dA(s). \nonumber
\end{align}
Inserting  \eqref{temp563} and \eqref{temp571} into \eqref{PMtTrepr} we obtain 
\begin{align*}
\frac{dP(t,T)}{P(t-,T)} &= -\lambda_t  dA(t) \\
   & +  \Big( f(t,t) -\bar a(t,T) +  \half \parallel \bar b(t,T) \parallel^2  \Big) dt \\
   & + \sum_{i \ge 0}  \left(e^{f(t,t)}-1\right) w_i\delta_{u_i}(dt)\\
   & - \int_{\R}  \nu(\{t\})(e^{f(t,t)}-1)  \lambda_t dA(t)+ dM^3_t
\end{align*}
with a local martingale $M^3$. We obtain a $Q^*$-local martingale if and only if the drift vanishes. Next, we can separate between absolutely continuous and discrete part. The absolutely continuous part yields \eqref{dcm2} and $f(t,t)=r_t+\lambda_t$ $dQ^*\otimes dt$-almost surely. It remains to compute the discontinuous part, which is given  by
\begin{align*}
\sum_{i:u_i \le t} P(u_i-,T) (e^{f(u_i,u_i)}-1) w_i-  \sum_{0<s\le t} P(s-,T)  e^{f(s,s)} \lambda_s \Delta A(s),
\end{align*}
for $0 \le t \le T \le T^*$. This yields $\{s:\Delta A(s)\neq 0\}\subset\{u_1,u_2,\dots\}$.
The discontinuous part vanishes if and only if 
\begin{align*} 
  \ind{u_i \le T^*\wedge \tau}e^{- f(u_i,u_i)} w_i= & \ind{u_i \le T^* \wedge \tau} \Big(w_i-\lambda_{u_i}\Delta A(u_i)\Big),\qquad i \ge 1,
\end{align*}
which is equivalent to 
\begin{align*} 
  \ind{u_i \le T^*\wedge \tau} f(u_i,u_i) = & -\ind{u_i \le T^* \wedge \tau} \log \frac{w_i-\lambda_{u_i}\Delta A(u_i)}{w_i},\qquad i \ge 1.
\end{align*}
We obtain  \eqref{dcm1} and  the claim follows.
\end{proof}

\begin{example}[The Merton model]\label{exp:merton_model}
The paper \cite{Merton1974}  considers a simple capital structure of a firm, consisting only of equity and a zero-coupon bond with maturity $U>0$. The firm defaults at $U$ if the total market value of its assets is not sufficient to cover the liabilities. 

We are interested in setting up an arbitrage-free market for credit derivatives and consider a market of defaultable bonds $P(t,T)$, $0 \le t \le T \le T^*$ with $0< U \le T^*$ as basis for more complex derivatives. In a stylized form the Merton model can be represented by a Brownian motion $W$ denoting the normalized logarithm of the firm's assets, a constant $K>0$ and the default time
$$ \tau = \begin{cases}
U      & \text{if }W_U \le K \\
\infty & \text{otherwise}.
\end{cases}$$ 
Assume for simplicity a  constant interest rate $r$ and let $\bbF$ be the filtration generated by $W$.  Then $P(t,T) =e^{-r(T-t)}$ whenever $T<U$ because these bonds do not carry default risk. On the other hand, for $t< U \le T $, 
\begin{align*}
P(t,T) &= e^{-r(T-t)}E^*[\ind{\tau>T}|\cF_t]= e^{-r(T-t)} E^*[\ind{\tau=\infty}|\cF_t]= e^{-r(T-t)}\Phi\bigg(\frac{W_t - K}{\sqrt{U-t}}\bigg),
\end{align*}
where $\Phi$ denotes the cumulative distribution function of a standard normal random variable and $E^*$ denotes the expectation with respect to $Q^*$. For $t\to U$ we recover $P(U,U)=\ind{\tau=\infty}$. The derivation of representation \eqref{PtT} with $\nu(du):=du+\delta_{U}(du)$ is straightforward. A simple calculation with 
\begin{align}\label{merton:PtT}
P(t,T) &= \ind{\tau > t} \exp\bigg( -\int_t^T f(t,u) du - f(t,U)\ind {t < U \le T} \bigg)
\end{align}
yields $f(t,T)=r$ for $T \not = U$ and 
$$ f(t,U) =- \log \Phi\bigg(\frac{W_t - K}{\sqrt{U-t}}\bigg). $$
By It\^{o}'s formula we obtain 
\begin{align*}
b(t,U)&= - \frac{\varphi\bigg(\frac{W_t - K}{\sqrt{U-t}}\bigg)}{\Phi\bigg(\frac{W_t - K}{\sqrt{U-t}}\bigg)}(U-t)^{-1/2},
\end{align*}
and indeed, $a(t,U)= \frac{1}{2}b^2(t,U).$ Note that  the conditions for Proposition \ref{prop:dcm} hold and, the market consisting of the bonds $P(t,T)$ satisfies NAFL, as expected. More flexible models of arbitrage-free bond prices can be obtained if the market filtration $\bbF$ is allowed to be more general, as we show in Section \ref{secAffine} on affine generalized Merton models.
\end{example}

\begin{example}(An extension of the Black-Cox model)
The model suggested in \cite{BlackCox1976} uses a first-passage time approach to model credit risk. Default happens at the first time, when the firm value falls below a pre-specified boundary, the default boundary. We consider a stylized version of this approach and continue the example \ref{exp:merton_model}. Extending the original approach, we include a zero-coupon bond with maturity $U$. The reduction of the firm value at $U$ is equivalent to considering a default boundary with an upward jump at that time. Hence, we consider a Brownian motion $W$ and the default boundary
$$ D(t) = D(0) + K \ind{ U \ge t}, \qquad t \ge 0, $$
with $D(0)<0$, and let default be the first time when $W$ hits $D$, i.e.
$$ \tau = \inf\{ t \ge 0: W_t \le D(t) \} $$
with the usual convention that $\inf\,  \emptyset = \infty$. 
The following lemma computes the default probability in this setting and the forward rates are directly obtained from this result together with \eqref{merton:PtT}.  The filtration $\bbG=\bbF$ is given by the natural filtration of the Brownian motion $W$ after completion. Denote the random sets
\begin{align*}
\Delta_1& :=\{(x,y)\in\R^2:x\sqrt{T-U}  \le D(U) - (y\sqrt{U-t} + W_t), y\sqrt{U-t} + W_t > D(0) \}\\
\Delta_2& :=\{(x,y)\in\R^2:x\sqrt{T-U}  \le D(U) - (y\sqrt{U-t} + 2D(0)-W_t),\\
&\phantom{\{(x,y)\in\R^2:\sqrt{T-U} x \le D(U)} y \sqrt{U-t} + D(0)-W_t > 0\} .
\end{align*}
\begin{lemma}
Let $D(0)<0$, $U>0$ and $D(U) \ge D(0)$.   For $0 \le t < U$, it holds  on $\{\tau >t \}$, that
\begin{align} \label{temp409} 
P(\tau > T|\cF_t) = 1-2\Phi\Big( \frac{D(0)-W_t}{\sqrt{T-t}} \Big) -\ind{ T \ge U} 2(\Phi_2(\Delta_1)-\Phi_2(\Delta_2)) ,
\end{align}
where $\Phi_2$ is the distribution of a two-dimensional standard normal distribution and the sets $\Delta_t=\Delta_t(D),\  t \ge U$ are given by
$$ \Delta_t= \{(x,y) \in \R^2: x\sqrt{T-U} + y \sqrt{U} \le -D(U),  \}. $$ For $t \ge U$ it holds  on $\{\tau >t \}$, that
$$ P(\tau > T|\cF_t) = 
  1-2\Phi\bigg( \frac{D(U)-W_t}{\sqrt{T-t}} \bigg). $$
\end{lemma}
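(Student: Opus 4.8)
The plan is to compute the conditional law of $\tau$ directly from the driving Brownian motion, exploiting that here $\bbG=\bbF$ is the (completed) natural filtration of $W$ and that $W$ is Markov. On $\{\tau>t\}$ the conditional distribution of the future path depends only on $W_t$, so I would fix $W_t=w$ (with $w>D(t)$) and read $P(\tau>T\mid\cF_t)$ as the probability that the path stays strictly above the deterministic barrier $s\mapsto D(s)$ on $[t,T]$. Since $D$ is piecewise constant with a single upward jump from $D(0)$ to $D(U)$ at $U$, this is a first-passage problem for Brownian motion against a two-level barrier, and the whole lemma reduces to reflection-principle computations.

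For $t\ge U$ the barrier is constant and equal to $D(U)$ on $[t,\infty)$, so the statement is the classical first-passage formula: with $m=\min_{[t,T]}W$ the reflection principle gives $P(m>D(U)\mid W_t=w)=1-2\Phi\big(\tfrac{D(U)-w}{\sqrt{T-t}}\big)$ for $w>D(U)$, which is the second displayed identity. The sub-case $t<U$, $T<U$ is identical with $D(U)$ replaced by $D(0)$ and reproduces the first identity with $\ind{T\ge U}=0$.

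The substantive case is $t<U\le T$. Since $D(U)>D(0)$, survival above the two-level barrier is contained in survival above the constant lower barrier $D(0)$ on all of $[t,T]$, so I would write
\[
 P(\tau>T\mid\cF_t)=\Big(1-2\Phi\big(\tfrac{D(0)-W_t}{\sqrt{T-t}}\big)\Big)-C,
\]
where the first term is the constant-barrier survival and $C$ is the probability that the path stays above $D(0)$ throughout $[t,T]$ yet has $\min_{[U,T]}W\le D(U)$. To evaluate $C$ I would condition on $W_U$, using the reflection-principle joint law of $(\min_{[t,U]}W,\,W_U)$ restricted to survival above $D(0)$, whose density is $\phi_{U-t}(y-w)-\phi_{U-t}(y+w-2D(0))$ (writing $\phi_s$ for the centered Gaussian density with variance $s$; the second term is the starting point $w$ reflected across $D(0)$ to $2D(0)-w$), together with the second-period crossing probabilities at levels $D(0)$ and $D(U)$. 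Standardizing the increments by $W_U=W_t+y\sqrt{U-t}$ and $W_T=W_U+x\sqrt{T-U}$ turns the factor $\Phi\big(\tfrac{D(U)-W_U}{\sqrt{T-U}}\big)$ into $P(W_T\le D(U)\mid W_U)$, so that the direct part of the density contributes a two-dimensional normal probability over $\Delta_1=\{W_U>D(0),\,W_T\le D(U)\}$ and the reflected part one over $\Delta_2$, the same region with $W_t$ replaced by its reflection $2D(0)-W_t$. Collecting these should yield $C=2(\Phi_2(\Delta_1)-\Phi_2(\Delta_2))$ and hence the claim.

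The main obstacle is the reflection bookkeeping across two distinct barrier levels over consecutive intervals: one must combine the reflection at $D(0)$ on $[t,U]$ with the reflection at $D(U)$ on $[U,T]$, keep precise track of the integration limits, the factor $2$, and the signs, and verify that the Gaussian pieces regroup exactly into the single-barrier term $1-2\Phi\big(\tfrac{D(0)-W_t}{\sqrt{T-t}}\big)$ and the stated bivariate regions $\Delta_1,\Delta_2$. I would pin these constants down with two consistency checks: the semigroup identity for the killed Brownian motion, which must recover the constant-barrier survival in the degenerate case $D(U)=D(0)$, and continuity of the integrand at the threshold $y=D(U)$ (where the two expressions for the second-period crossing probability must agree). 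This matching of regions and constants is where I expect sign or level slips to be most likely, and it is the step I would scrutinize most carefully.
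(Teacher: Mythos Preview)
Your approach is essentially the same as the paper's. Both arguments treat the easy cases $t\ge U$ and $T<U$ by the classical reflection principle, and for the main case $t<U\le T$ both condition on $W_U$, use the reflection-principle density
\[
 f_t(x)=\ind{x>D(0)}\,\frac{1}{\sqrt{U-t}}\Big[\phi\Big(\tfrac{W_t-x}{\sqrt{U-t}}\Big)-\phi\Big(\tfrac{2D(0)-x-W_t}{\sqrt{U-t}}\Big)\Big]
\]
for $W_U$ restricted to $\{\min_{[t,U]}W>D(0)\}$, integrate against the second-period factor coming from the level $D(U)$, and then recognize the direct and reflected pieces as the bivariate normal probabilities $\Phi_2(\Delta_1)$ and $\Phi_2(\Delta_2)$. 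Your decomposition $P(\tau>T\mid\cF_t)=\big(1-2\Phi(\tfrac{D(0)-W_t}{\sqrt{T-t}})\big)-C$ is exactly the split the paper performs after writing $P(\tau>T\mid\cF_t)=\int[1-2\Phi(\tfrac{D(U)-x}{\sqrt{T-U}})]f_t(x)\,dx$ and separating the constant $1$ from the $2\Phi$ term; the remaining integral is your $C$, and the paper identifies its two summands with $\Phi_2(\Delta_1)$ and $\Phi_2(\Delta_2)$ just as you outline. Your caution about the bookkeeping (integration limits, the factor~$2$, and making the pieces regroup to the stated leading term) is well placed---that is precisely where the computation is delicate.
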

\begin{proof}
The first part  of \eqref{temp409} where $T < U$ follows directly from the reflection principle and the property that $W$ has independent and stationary increments. Next, consider $0 \le t < U \le T$. Then, on $\{W_U > D(U)\}$,
\begin{align}\label{temp422}
P(\inf_{[U,T]}W > D(U) | \cF_U) &= 1-2\Phi\bigg( \frac{D(U)-W_U}{\sqrt{T-U}} \bigg).
\end{align}
Moreover, on $\{W_t>D(0)\}$ it holds for $x>D(0)$ that
\begin{align*}
P(\inf_{[0,U]} W > D(0), W_U>x|\cF_t) &= P(W_U>x|\cF_t) - P(W_U<x, \inf_{[0,U]} W \le D(0)|\cF_t) \\
&= \Phi\bigg( \frac{W_t-x}{\sqrt{U-t}} \bigg) - \Phi\bigg( \frac{2 D(0)-x-W_t}{\sqrt{U-t}} \bigg).
\end{align*}
Hence, $E[g(W_U) \ind{\inf_{[0,U]} W > D(0)}|\cF_t] = \ind{\inf_{[0,t]} W > D(0)}\int_{D(0)}^\infty g(x)f_t(x) dx$ with  density 
$$ f_t(x) = \ind{x>D(0)} \frac{1}{\sqrt{U-t}} \Big[ \phi\Big(\frac{W_t-x}{\sqrt{U-t}}\Big) 
- \phi\Big( \frac{2D(0)-x-W_t}{\sqrt{U-t}} \Big) \Big]. $$ 
Together with \eqref{temp422} this yields on $\{\inf_{[0,t]} W > D(0)\}$
\begin{align*}
   P(\inf_{[0,T]}(W-D) > 0 | \cF_t) &= \int_{D(0)}^\infty \bigg[1-2\Phi\bigg( \frac{D(U)-x}{\sqrt{T-U}}\bigg)\bigg] f_t(x) dx \\
   &= P(\inf_{[t,T]} W > D(0) |\cF_t)- 2 \int_{D(0)}^\infty \Phi\bigg( \frac{D(U)-x}{\sqrt{T-U}}\bigg) f_t(x) dx.
\end{align*}
It remains to compute the integral. Regarding the first part, letting $\xi$ and $\eta$ be independent and standard normal, we obtain that
\begin{align*}
\hspace{1cm}&\hspace{-1cm}\int_{D(0)}^\infty \Phi\bigg( \frac{D(U)-x}{\sqrt{T-U}}\bigg) \frac{1}{\sqrt{U-t}} \phi\Big(\frac{x-W_t}{\sqrt{U-t}}\Big)  dx \\
&= P_t\Big( \sqrt{T-U} \xi \le D(U) - (\sqrt{U-t}\eta + W_t), \sqrt{U-t}\eta + W_t > D(0) \Big) \\
&= \Phi_2(\Delta_1),
\end{align*}
where we abbreviate $P_t(\cdot)=P(\cdot|\cF_t)$.
In a similar way,
\begin{align*}
&\int_{D(0)}^\infty \Phi\bigg( \frac{D(U)-x}{\sqrt{T-U}}\bigg) \frac{1}{\sqrt{U-t}}\phi\Big( \frac{x-(2D(0)-W_t)}{\sqrt{U-t}} \Big)  dx \\
&=P_t\Big( \sqrt{T-U} \xi \le D(U) - (\sqrt{U-t}\eta + 2D(0)-W_t), \sqrt{U-t}\eta + D(0)-W_t > 0 \Big) \\
&= \Phi_2(\Delta_2)
\end{align*}
and we conclude.
\end{proof}

\end{example}

\section{Affine models in the generalized intensity-based framework}\label{secAffine}
Affine processes are a well-known tool in the financial literature and one reason for this is their analytical tractability. In this section we closely follow \cite{GehmlichSchmidt2015} and shortly state the appropriate affine models which fit the generalized intensity framework. For proofs, we refer the reader to this paper. 

The main point is that affine processes  in the literature are assumed to be \emph{stochastically continuous} (see \cite{DuffieFilipovicSchachermayer} and \cite{Filipovic05}).  Due to the discontinuities introduced in the generalized intensity-based framework,  we propose to consider \emph{piecewise continuous affine processes}.

\begin{example}\label{ex:doublystochastic}
Consider a non-negative integrable function $\lambda$, a constant $\lambda'\ge 0$ and a deterministic time $u>0$. Set
$$ K(t)= \int_0^t \lambda(s) ds + \ind{t \ge u} \kappa, \qquad t \ge 0. $$ 
Let the default time $\tau$ be given by $\tau=\inf\{t \ge 0: K_t \ge \zeta\}$  with a standard exponential-random variable $\zeta$. Then $P(\tau = u)= 1-e^{-\kappa}=:\lambda'$. Considering $\nu(ds)=ds + \delta_u(ds)$ with $u_1=u$ and $w_1=1$, we are in the setup of the previous section. The drift condition \eqref{dcm1} holds, if 
$$ f(u,u) = - \log(1-\lambda') = \kappa.$$
Note, however, that $K$ is not the compensator of $H$. Indeed, the compensator of $H$ equals $\Lambda_t=\int_0^{t \wedge \tau} \lambda(s) ds + \ind{t \ge u} \lambda'$, see \cite{JeanblancRutkowski2000} for general results in this direction.\end{example}

The purpose of this section is to give a suitable extension of the above example  involving  affine processes. 
 Recall that we consider a $\sigma$-finite measure  
 $$ \nu(du) = du + \sum_{i\ge 1} w_i\delta_{u_i}(du), $$
 as well as $A(u) = u + \sum_{i \ge 1} \ind{u \ge u_i }$. The idea is to consider an affine process $X$ and study arbitrage-free doubly stochastic term structure models where the compensator $\Lambda$ of the default indicator process $H=\ind{\cdot \le \tau}$ is given by
\begin{align} \label{affineHp} 
\Lambda_t = \int_0^t \Big( \phi_0(s)+\psi_0(s)^\top \cdot X_s \Big) ds + \sum_{i\ge 1}\ind{t \ge u_i} \Big(1-e^{- \phi_i - \psi_i^\top \cdot X_{u_i}}\Big).
\end{align}
Note that by continuity of $X$, $\Lambda_t(\omega)<\infty$ for almost all $\omega$.
To ensure that $\Lambda$ is non-decreasing we will require that $\phi_0(s)+\psi_0(s)^\top \cdot X_s  \ge 0$ for all $s \ge 0$ and 
$\phi_i + \psi_i^\top \cdot X_{u_i} \ge 0$ for all $i\ge 1$. 

Consider a state space in canonical form $\cX=\R_{\ge 0}^m \times \R^n$ for integers $m,n \ge 0$ with $m+n=d$ and a $d$-dimensional Brownian motion $W$. Let $\mu$ and $\sigma$ be defined on $\cX$ by
\begin{align}
\label{affine_mu}
\mu(x) &= \mu_0 + \sum_{i=1}^dx_i\mu_i,\\
\label{affine_sigma}
\half \sigma(x)^\top\sigma(x) &= \sigma_0 + \sum_{i=1}^dx_i\sigma_i,
\end{align}
where $\mu_0,\mu_i\in\R^d$, $\sigma_0,\sigma_i\in\R^{d\times d}$, for all $i\in\{1,\ldots,d\}$. We assume that the parameters $\mu^i,\ \sigma^i$, $i=0,\dots,d$ are admissible in the sense of Theorem 10.2 in \cite{Filipovic2009}. Then the continuous, unique strong solution  of the stochastic differential equation
\begin{align}
dX_t &= \mu(X_t)dt + \sigma(X_t)dW_t,\quad X_0=x, 
\end{align}
is an \emph{affine} process $X$ on the state space $\cX$, see Chapter 10 in \cite{Filipovic2009} for a detailed exposition.

We call a bond-price model \emph{affine} if there exist functions $A:\R_{\ge 0}\times \R_{\ge 0} \to \R$, $B:\R_{\ge 0}\times \R_{\ge 0} \to \R^d$ such that
\begin{align}\label{affineMertonmodel}
P(t,T)= \ind{\tau > t} e^{-A(t,T)-B(t,T)^\top \cdot X_t},
\end{align}
for $0 \le t \le T \le T^*$. We assume that $A(.,T)$ and $B(.,T)$ are right-continuous. Moreover, we assume that  $t \mapsto A(t,.)$ and $t \mapsto B(t,.)$ are differentiable from the right and denote by $\partial_t^+$ the right derivative.  
For the convenience of the reader we state  the following proposition giving sufficient conditions for absence of arbitrage in an affine generalized-intensity based setting. It extends  \cite{GehmlichSchmidt2015} where only finitely many risky times were treated.

\begin{proposition}\label{prop:affine}
Assume that $\phi_0:\R_{\ge 0}\to \R$, $\psi_0:\R_{\ge 0}\to \R^d$ are continuous,  $\psi_0(s)+\psi_0(s)^\top \cdot x \ge 0$ for all $s \ge 0$ and $x \in \cX$ and the constants $\phi_i \in \R$ and $\psi_i \in \R^d$, $i \ge 1$ satisfy $\phi_i + \psi_i^\top \cdot x \ge 0$ for all $1 \le i \le n$ and $x \in \cX$ as well as $\sum_{i \ge 1}|w_i|(|\phi_i| +|\psi_{i,1}|+\dots+|\psi_{i,d}|)<\infty$.
Moreover, let the functions $A:\R_{\ge 0}\times \R_{\ge 0} \to \R$ and $B:\R_{\ge 0}\times \R_{\ge 0}\to \R^d$ be the unique solutions of 
\begin{align} \label{ric1}
\begin{aligned}A(T,T) &= 0 \\
A(u_i,T) &= A(u_i-,T)-\phi_i w_i\\
- \partial_t^+ A(t,T) &= \phi_0(t) +  \mu_0^\top \cdot B(t,T) 
- B(t,T)^\top \cdot \sigma_0 \cdot B(t,T),  \\
\end{aligned} \intertext{and} \label{ric2}
\begin{aligned}
B(T,T) &= 0 \\
B_k(u_i,T) &= B_k(u_i-,T) - \psi_{i,k} w_i\\ 
-\partial_t^+ B_k(t,T) &= \psi_{0,k}(t) +  \mu_k^\top \cdot B(t,T)  - B(t,T) ^\top \cdot \sigma_k \cdot B(t,T),
\end{aligned}
\end{align}
for $0 \le t \le T$.
Then, the doubly-stochastic affine model given by \eqref{affineHp} and \eqref{affineMertonmodel} satisfies NAFL. 
\end{proposition}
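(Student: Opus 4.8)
The plan is to show that $Q^*$ is an equivalent local martingale measure (ELMM) and then invoke the cited theory, by which the existence of an ELMM already yields NAFL. Because the forward curve of an affine model is discontinuous in $t$ at the risky times $u_i$ ---~so that the It\^o representation \eqref{f1} with $t$-continuous forward rates is not directly at hand~--- I would not extract forward rates and quote Theorem \ref{prop:dcm}, but instead compute the dynamics of $P(\cdot,T)$ by hand, following the integration-by-parts scheme used to prove Theorem \ref{prop:dcm}. Set $E(t)=\ind{\tau>t}$ and, writing $A=A(t,T)$, $B=B(t,T)$, $\bar F(t,T)=e^{-A-B^\top\cdot X_t}$, so that $P(t,T)=E(t)\bar F(t,T)$. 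Integration by parts gives $dP=\bar F(t-,T)\,dE+E(t-)\,d\bar F+d[E,\bar F]$, and the aim is to verify that the drift of the discounted price vanishes on $\{t<\tau\}$, separately on the absolutely continuous part of time and at the atoms $u_i$.

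For the continuous part, \eqref{affineHp} shows that the compensator of $H=1-E$ has density $\lambda_t=\phi_0(t)+\psi_0(t)^\top\cdot X_t$, so that $dE=-dM-d\Lambda$ contributes $-\lambda_t$ to the continuous drift. Applying It\^o's formula to $\bar F=e^{-A-B^\top X}$ with $dX_t=\mu(X_t)dt+\sigma(X_t)dW_t$ and the affine forms \eqref{affine_mu}--\eqref{affine_sigma}, the continuous drift of $d\bar F/\bar F(t-,T)$ equals $-\partial_t^+A-(\partial_t^+B)^\top X_t-B^\top\mu(X_t)+B^\top\big(\sigma_0+\sum_{k}X_{t,k}\sigma_k\big)B$. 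Substituting the two Riccati identities for $-\partial_t^+A$ and $-\partial_t^+B_k$ from \eqref{ric1} and \eqref{ric2}, every $\mu$- and $\sigma$-term cancels against the right-derivatives and this collapses to exactly $\phi_0(t)+\psi_0(t)^\top\cdot X_t=\lambda_t$. Hence on $\{t<\tau\}$ the $+\lambda_t$ produced by $\bar F$ cancels the $-\lambda_t$ from the compensator, the continuous part of the drift condition of Theorem \ref{prop:dcm} is met, and the continuous drift of the discounted bond price vanishes.

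At a risky time $u_i$, continuity of $X$ and the jump conditions in \eqref{ric1}--\eqref{ric2} force $A+B^\top X$ to jump by $-(\phi_i+\psi_i^\top\cdot X_{u_i})w_i$, so on non-default $\bar F$ is multiplied by $\exp\big((\phi_i+\psi_i^\top\cdot X_{u_i})w_i\big)$, while on default $P(\cdot,T)$ drops to $0$. The $\cF_{u_i-}$-conditional default probability is the compensator jump $\Delta\Lambda(u_i)=1-e^{-\phi_i-\psi_i^\top\cdot X_{u_i}}$ prescribed in \eqref{affineHp}. The compensated jump of $P(\cdot,T)$ vanishes precisely when this multiplicative factor equals the reciprocal $e^{\phi_i+\psi_i^\top\cdot X_{u_i}}$ of the conditional survival probability $1-\Delta\Lambda(u_i)$; checking this identity ---~which is where the weights $w_i$ must be tied to the jump sizes $\Delta A(u_i)$ of the reference measure~--- is the discrete analogue of the risky-time drift condition in Theorem \ref{prop:dcm} and the crux of the argument.

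It remains to control the countably many risky times, the genuine extension over \cite{GehmlichSchmidt2015}. Here the summability hypothesis $\sum_i|w_i|(|\phi_i|+|\psi_{i,1}|+\dots+|\psi_{i,d}|)<\infty$, together with continuity of $\phi_0,\psi_0$, admissibility of the affine parameters and continuity of $X$, must be used to guarantee that $\bar F(\cdot,T)$ is a well-defined special semimartingale, that the series of jump contributions converges, and that the candidate martingale parts arising in the integration by parts are true local martingales (after a localization reducing each piece). Assembling the two computations then shows that $P(\cdot,T)/X^0$ is a $Q^*$-local martingale for every $T\in[0,T^*]$, so $Q^*$ is an ELMM and NAFL follows. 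The hardest point is exactly this combination: verifying that the generalized Riccati jump conditions reproduce the survival-probability reciprocal at each $u_i$, and doing so with enough uniformity that the countable superposition of jumps still defines a genuine, and not merely formal, local martingale.
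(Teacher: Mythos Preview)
Your approach is sound, but it takes a different route from the paper. The paper \emph{does} extract forward rates and invoke Theorem~\ref{prop:dcm}: it decomposes $A(t,T)=\int_t^T a'(t,u)\,du+\sum_{u_i\in(t,T]}\phi_i w_i$ (and similarly for $B$), reads off $f(t,u_i)=\phi_i+\psi_i^\top X_t$ at the atoms and $f(t,T)=a'(t,T)+b'(t,T)^\top X_t$ elsewhere, then computes $\bar a(t,T)$ and $\bar b(t,T)$ via It\^o's formula applied to $f$ and verifies the drift conditions \eqref{dcm1}--\eqref{dcm2} directly. Your premise that the forward rates would be discontinuous in $t$ is a misconception: the jumps of $A(\cdot,T)$ and $B(\cdot,T)$ at each $u_j$ are exactly the terms $\phi_j w_j$ and $\psi_j w_j$ dropping out of the sum $\sum_{u_i\in(t,T]}$ as $t$ crosses $u_j$, so the extracted $f(\cdot,T)$ are genuine continuous It\^o processes driven by $W$, and the hypotheses of Theorem~\ref{prop:dcm} are met. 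Your direct integration-by-parts computation on $E(t)\bar F(t,T)$ is nonetheless a legitimate alternative that spares you the identification of $a',b'$ and the re-verification of Assumption~\ref{ass2}, at the price of essentially redoing the semimartingale calculation from the proof of Theorem~\ref{prop:dcm} with the deterministic jumps of $A,B$ now carried along explicitly. The paper's route is more modular because it reuses the general machinery; yours is more self-contained. Both arguments converge on the same verification at the risky times, where the paper simply invokes \eqref{temp244} (using the Section~\ref{secAffine} convention $\Delta A(u_i)=1$) to match $f(u_i,u_i)=\phi_i+\psi_i^\top X_{u_i}$ against the compensator jump $\Delta\Lambda_{u_i}=1-e^{-\phi_i-\psi_i^\top X_{u_i}}$ from \eqref{affineHp}.
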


\begin{proof}
By construction,
\begin{align*}
A(t,T)  = \int_t^T a'(t,u) du+\sum_{i:u_i \in (t,T]} \phi_i w_i \\
B(t,T)  = \int_t^T b'(t,u) du+\sum_{i:u_i \in (t,T]} \psi_i w_i
\end{align*}
with suitable functions $a'$ and $b'$ and $a'(t,t)=\phi_0(t)$ as well as $b'(t,t)=\psi_0(t)$.
A comparison of \eqref{affineMertonmodel} with \eqref{PtT} yields the following: on the one hand, for $T =u_i\in \cU$, we obtain $f(t,u_i)=\phi_i+\psi_i^\top \cdot X_t$.  Hence, the coefficients $a(t,T)$ and $b(t,T)$ in \eqref{f1} for $T=u_i \in \cU$ compute to $a(t,u_i)=\psi_i^\top \cdot \mu(X_t)$ and $b(t,u_i) = \psi_i^\top \cdot \sigma(X_t)$.

On the other hand, for $T \not \in \cU$ we obtain that $f(t,T) = a'(t,T)+b'(t,T)^\top \cdot X_t$. Then, the coefficients $a(t,T)$ and $b(t,T)$  can be computed  as follows: applying It\^o's formula to $f(t,T)$ and comparing with \eqref{f1} yields that
\begin{align} \label{abaraffine}
\begin{aligned}
  a(t,T) &= \partial_t a'(t,T) + \partial_t b'(t,T)^\top \cdot X_t + b'(t,T)^\top \cdot \mu(X_t) \\
  b(t,T) &= b'(t,T)^\top \cdot \sigma(X_t).
\end{aligned}
\end{align}
Set $\bar a'(t,T)=\int_t^T a'(t,u) du$ and $\bar b'(t,T)=\int_t^T b'(t,u) du$ and note that, 

$$ \int_t^T \partial_t a'(t,u) du = \partial_t \bar a'(t,T) + a'(t,t). $$
As $\partial_t^+ A(t,T)=\partial_t \bar a'(t,T)$, and $\partial_t^+ B(t,T)=\partial_t \bar b'(t,T)$, we obtain from \eqref{abaraffine} that 
\begin{align*}
  \bar a(t,T) & = \int_t^T a (t,u) \nu (du) = \int_t^T a(t,u) du  + \sum_{u_i \in (t,T]} w_i\psi_i^\top \cdot \mu(X_t) \\
  &= \partial_t^+ A(t,T) + a'(t,t) 
  + \big( \partial_t^+ B(t,T) + b'(t,t) \big)^\top \cdot X_t + B(t,T)^\top  \cdot \mu(X_t), \\ 
   \bar b(t,T) & =\int_t^T b(t,u) \nu(du) = \int_t^T b(t,u) du +  \sum_{u_i \in (t,T]} w_i\psi_i^\top \cdot \sigma(X_t) \\
   &=B(t,T)^\top \cdot \sigma(X_t)
\end{align*}
for $0 \le t \le T \le T^*$.
We now show that under our assumptions, the drift conditions \eqref{dcm1}-\eqref{dcm2} hold: 
Observe that, by equations \eqref{ric1}, \eqref{ric2}, and the affine specification \eqref{affine_mu}, and \eqref{affine_sigma}, 
the drift condition  \eqref{dcm2} holds. 
Moreover, from \eqref{temp244},
$$ \Delta H'(u_i) =  \phi_i + \psi_i^\top \cdot X_{u_i} $$
and $\lambda_s= \phi_0(s) + \psi_0(s)^\top \cdot X_s$ by \eqref{affineHp}. We recover $\Delta \Lambda_{u_i} = 1-\exp(-\phi_i - \psi_i^\top \cdot X_{u_i})$ taking values in $[0,1)$ by assumption. Hence, 
 \eqref{dcm1} holds and the claim follows. 
\end{proof}

\begin{example}
In the one-dimensional case we consider  $X$,  given as solution of 
\begin{align*}
dX_t &= (\mu_0+\mu_1 X_t)dt + \sigma\sqrt{X_t}dW_t, \qquad t \ge 0.
\end{align*}
Consider only one risky time $u_1=1$ and let $\phi_0=\phi_1 = 0$, $\psi_0=1$, such that
$$ \Lambda = \int_0^t X_s ds + \ind{u \ge 1} (1-e^{-\psi_1 X_{1}}). $$
Hence the probability of having no default at time $1$ just prior to $1$ is given by $e^{-\psi_1 X_{1}}$, compare Example \ref{ex:doublystochastic}.

An arbitrage-free model can be obtained by choosing $A$ and $B$ according to Proposition \ref{prop:affine} which can be immediately achieved using Lemma 10.12 from \cite{Filipovic2009} (see in particular Section 10.3.2.2 on the CIR short-rate model): denote $\theta=\sqrt{\mu_1^2+2\sigma^2}$ and
\begin{align*}
L_1(t) &= 2(e^{\theta t}-1) ,\\
L_2(t) &= \theta(e^{\theta t}+1) + \mu_1 (e^{\theta t}-1) ,\\
L_3(t) &= \theta(e^{\theta t}+1) - \mu_1 (e^{\theta t}-1) ,\\
L_4(t) &= \sigma^2(e^{\theta t}-1).
\end{align*}
Then 
\begin{align*} 
  A_0(s) &= \frac{2 \mu_0}{\sigma^2} \log \Big( \frac{2 \theta e^{\frac{(\sigma - \mu_1)t}{2}} } {L_3(t)}\Big), \quad B_0(s) = -\frac{L_1(t)}{L_3(t)} 
\end{align*}
are the unique solutions of the Riccati equations $B_0'=\sigma^2 B_0^2-\mu_1 B_0 $ with boundary condition $B_0(0)=0$ and $A_0'=-\mu_0 B_0$ with boundary condition $A_0(0)=0$. Note that with $A(t,T) = A_0(T-t)$ and $B(t,T)= B_0(T-t)$ for $0 \le t \le T < 1$, the conditions of Proposition \ref{prop:affine} hold. Similarly,
for $1 \le t \le T$, choosing
$A(t,T)=A_0(T-t)$ and $B(t,T)=B_0(T-t)$ implies again the validity of \eqref{ric1} and \eqref{ric2}. On the other hand, for $0 \le t <1$ and $T \ge 1$ we set $u(T)=B(1,T)+\psi_1=B_0(T-1)+\psi_1$, according to \eqref{ric2}, and let 
\begin{align*}
  A(t,T) &= \frac{2 \mu_0}{\sigma^2} \log \Big( \frac{2 \theta e^{\frac{(\sigma - \mu_1)(1-t)}{2}} } {L_3(1-t)-L_4(1-t)u(T)}\Big)\\
  B(t,T) &= -\frac{L_1(1-t)-L_2(1-t)u(T)}{L_3(1-t)-L_4(1-t)u(T)}.
\end{align*}
It is easy to see that \eqref{ric1} and \eqref{ric2} are also satisfied in this case, in particular  $\Delta A(1,T)=-\phi_1=0$ and $\Delta B(1,T)=-\psi_1$. Note that, while $X$ is continuous, the bond prices are not even stochastically continuous because they jump almost surely at $u_1=1$.  We conclude by Proposition \ref{prop:affine} that this affine model is arbitrage-free. \hfill $\diamond$
\end{example}

\section{Conclusion}\label{sec:conclusion}

In this article we studied a new class of dynamic term structure models with credit risk where the compensator of the default time may jump at predictable times. This framework was called generalized intensity-based framework. It extends existing theory and allows to include  Merton's model, in a reduced-form model for pricing credit derivatives. Finally,  we studied a class of highly tractable affine models which are  only piecewise stochastically continuous.



\end{document}